\newtheorem{theorem}{Theorem}[section]
\newtheorem{remark}{Remark}[section]
\newtheorem{lemma}[theorem]{Lemma}
\newtheorem{definition}{Definition}[section]
\newcommand{\eqs}[1]{\begin{equation*}#1\end{equation*}}
\renewcommand{\bar}{\overline}
\renewcommand{\to}{\rightarrow}
\newcommand{\cB}{\mathcal{B}}
\newcommand{\cN}{\mathcal{N}}
\renewcommand{\cN}{\mathcal{N}}
\newcommand{\bbR}{\mathbb{R}}
\newcommand{\bbZ}{\mathbb{Z}}
\author{Ram Somaraju and Jochen Trumpf}
\title{Degrees of Freedom of a Communication Channel and Kolmogorov
  numbers}
\date{}
\begin{document}
\maketitle
\abstract In this note, we show that the operator theoretic concept of
Kolmogorov numbers and the number of degrees of freedom at
level $\epsilon$ of a communication channel are closely
related. Linear communication channels may be modeled using linear
compact operators on Banach or Hilbert spaces and the number of
degrees of freedom of such channels is defined to be the number of
linearly independent signals that may be communicated over this
channel, where the channel is restricted by a threshold noise
level. Kolmogorov numbers are a particular example of $s$-numbers,
which are defined over the class of bounded operators between Banach
spaces. We demonstrate that these two concepts are closely related,
namely that the Kolmogorov numbers correspond to the ``jump points''
in the function relating numbers of degrees of freedom with the noise
level $\epsilon$. We also establish a useful numerical computation
result for evaluating Kolmogorov numbers of compact operators.
  
\section{Introduction}
The number of degrees of freedom of a communication channel plays an
important role in evaluating the channel's Shannon capacity (see
e.g.~\cite{Somaraju2009b}). In a  physical communication system,
because of constraints such as finite transmission power and noise at
the receiver, only finitely many (linearly independent) signals may be
exchanged between the transmitter and receiver. This physical
intuition may be captured using the concept of number of degrees of
freedom of the communication channel. The number of degrees of freedom
of a communication channel has been used in evaluating the Shannon
capacity for several physically realistic channels\footnote{The concept of  number of degrees of freedom is used implicitly in evaluating 
the capacity of Shannon's classical bandwidth limited, additive Gaussian white noise  channel. In particular it can be shown that as the bandwidth becomes large the number of degrees of freedom at level $\epsilon$ approaches the well know constant $2WT$, for all noise levels $\epsilon$ (see Gallager~\cite[Ch. 8]{Gallager1968}).} (see
e.g.~\cite{Somaraju2009b,Shannon1948,Biglieri1998,Gallager1968}). This
concept has also been used in various other problem domains such as
multi-antenna communication~\cite{Migliore2006,Hanlen2006,Xu2006},
optics~\cite{Miller2000,Piestun2000} and electromagnetic field
sampling~\cite{Bucci1989}.  

Kolmogorov numbers are a particular example of so called $s$-number
sequences. In the theory of $s$-numbers one associates with every
bounded linear operator $T$, mapping between any two Banach spaces,
a scalar sequence $s_1(T)\geq s_2(T)\geq \ldots\geq 0$ (see
e.g.~\cite{Pietsch1980,Pietsch1987}).  
A classical example in the more restricted category of compact operators
mapping between Hilbert spaces is the sequence of singular values. 
$s$-number sequences of various
types have primarily been used to classify operators based on the behaviour
of the sequence $s_n$ as $n\to \infty$. In particular, various
interesting operator ideals have been obtained based on the behaviour of these
sequences.

In this note we establish the connection between Kolmogorov numbers
and degrees of freedom of a communication channel. The remainder of
this note is organised as follows: in Section~\ref{sec:definitions}, we recall
the definitions of degrees of freedom of a communication channel and
briefly explain the physical intuition behind the definition. We also
recall the definitions of $s$-number sequences in general and Kolmogorov
numbers in particular. We present our main results, establishing the connection
between the number of degrees of freedom and Kolmogorov numbers, in
Section~\ref{sec:maiRes} and provide conclusions in the final
Section~\ref{sec:conclusion}.  

\section{Definitions}\label{sec:definitions}
\subsection{Degrees of Freedom}
The number of degrees of freedom of a communication channel has for
example been defined in~\cite{Somaraju2009b}. The physical model used
in~\cite{Somaraju2009b} is as follows: there is a normed vector space
$X$ of transmitter functions and a normed vector space $Y$ of receiver
functions and a channel operator $T:X\to Y$ that is a compact linear
operator. Physically, the elements of $X$ can be thought of as signals
that a transmitter generates, and $T$ maps these signals onto elements
of $Y$, which is the space of signals a receiver can measure. It is
further assumed that there are two constraints on this channel: 1)
the power available for transmission is finite. This constraint is
modeled by assuming that the transmitter may only generate elements
of $X$ that have a norm less than or equal to $1$. 2) the receiver is noisy and
therefore small signals can not be measured. This constraint is
modeled by assuming that only received signals with norm greater than
some pre-specified small constant $\epsilon$ may be measured.  

It was shown in~\cite{Somaraju2009b} that the following definition may be used for the number of degrees of freedom of a compact operator that models a communication channel.
\begin{definition}[Degrees of freedom at
level~$\epsilon$]\label{Ndef:banDof} Suppose $ X$ and $ Y$
are normed spaces with norms $\|\cdot\|_{ X}$ and $\|\cdot\|_{ Y}$, respectively, and $T: X\to  Y$ is a compact operator. Also, let $\bar{B}_{r,X}(\theta)$ denote the closed ball of radius $r$ centered at $\theta$ in $X$.
Then the number of degrees of freedom  $\cN(\epsilon)$ of $T$ at level~$\epsilon$ is the
smallest\footnote{$\bbZ_0^+$ denotes the set of non-negative integers.}  $N\in \bbZ_0^+$ such that there exists a set of vectors
$\{\psi_1,\ldots,\psi_N\}\subset Y$ such that for all $x\in \bar{B}_{1,X}(0)$
\begin{eqnarray*}
    &\inf_{a_1,\ldots,a_N} \left\|Tx-\sum_{i=1}^N a_i\psi_i\right\|_{ Y}
    \leq \epsilon.&\\&&
\end{eqnarray*}
\end{definition}

It was shown in~\cite{Somaraju2009b} that the number of degrees of
freedom is well defined.
Physically, we interpret this definition as follows: if there is some constraint $\|\cdot\|_X\leq 1$ on
the space of transmitter functions\footnote{We normalised the
  transmitter power constraint by assuming that available transmission power, $P=1$. Different
  normalisations can e.g. be achieved by re-scaling the norm on $X$.} and if the receiver can only measure
signals that satisfy $\|\cdot\|_Y > \epsilon$, then the number of
degrees of freedom is the maximum number of linearly independent
signals that the receiver can distinguish under these constraints, where
the maximum is taken over all possible signal constellations.

The following theorem is a simple consequence of the above definition.
\begin{theorem}~\cite[Th. 3.2]{Somaraju2009b}\label{th:dofProp}
Suppose $ X$ and $ Y$ are normed spaces with norms $\|\cdot\|_{X}$ and $\|\cdot\|_{ Y}$, respectively, and $T: X\to Y$ is a compact operator. Let $\cN(\epsilon)$ denote the number of degrees of freedom of $T$ at level~$\epsilon$. Then
\begin{enumerate}
	\item \label{dofProp2} $\cN(\epsilon) = 0$ for all $\epsilon\geq\|T\|$.
	\item \label{dofProp5} Unless $T$ is identically zero, there
          exists an $\epsilon_0>0$ such that $\cN(\epsilon) \geq 1$
          for all $0<\epsilon<\epsilon_0$.      	
	\item \label{dofProp3} $\cN(\epsilon)$ is a non-increasing,
          upper semicontinuous function of $\epsilon$. 	
	\item \label{dofProp4} In any finite interval
          $(\epsilon_1,\epsilon_2)\subset\bbR$, with $0<\epsilon_1<
          \epsilon_2$, $\cN(\epsilon)$ has only finitely many
          discontinuities, i.e. $\cN(\epsilon)$ only takes finitely
          many non-negative integer values in any finite $\epsilon$ interval. 
\end{enumerate}
\end{theorem}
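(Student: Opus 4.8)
The plan is to prove the four assertions in sequence, each following fairly directly from Definition~\ref{Ndef:banDof}.

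For part~\ref{dofProp2}, observe that if $\epsilon \geq \|T\|$ then for every $x \in \bar B_{1,X}(0)$ we have $\|Tx\|_Y \leq \|T\|\,\|x\|_X \leq \|T\| \leq \epsilon$. Hence the empty set of vectors (i.e.\ $N=0$, with the convention that the infimum over an empty index set of $\|Tx - \sum_{i=1}^0 a_i\psi_i\|_Y = \|Tx\|_Y$) already satisfies the defining inequality, so $\cN(\epsilon)=0$.

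For part~\ref{dofProp5}, suppose $T$ is not identically zero, so there is $x_0$ with $\|x_0\|_X \leq 1$ and $\|Tx_0\|_Y = c > 0$. Set $\epsilon_0 = c$. For any $\epsilon < \epsilon_0$, I claim $\cN(\epsilon) \geq 1$: if $\cN(\epsilon) = 0$ then the defining condition with $N=0$ forces $\|Tx\|_Y \leq \epsilon$ for all $x \in \bar B_{1,X}(0)$, contradicting $\|Tx_0\|_Y = c > \epsilon$. Hence $\cN(\epsilon) \geq 1$ for all $0 < \epsilon < \epsilon_0$.

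For part~\ref{dofProp3}, monotonicity is immediate: if $\epsilon_1 \leq \epsilon_2$, any family $\{\psi_i\}_{i=1}^N$ witnessing $\cN(\epsilon_1) \leq N$ also witnesses $\cN(\epsilon_2) \leq N$, since the $\epsilon_1$-inequality implies the $\epsilon_2$-inequality; thus $\cN(\epsilon_2) \leq \cN(\epsilon_1)$. For upper semicontinuity, since $\cN$ is integer-valued and non-increasing, it suffices to show that $\cN$ is right-continuous, i.e.\ $\lim_{\delta \downarrow 0} \cN(\epsilon + \delta) = \cN(\epsilon)$; equivalently, that $\cN(\epsilon + \delta) = \cN(\epsilon)$ for all sufficiently small $\delta > 0$. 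Fix $\epsilon$ and write $N = \cN(\epsilon)$; monotonicity already gives $\cN(\epsilon+\delta) \leq N$. The content is the reverse inequality $\cN(\epsilon + \delta) \geq N$ for small $\delta$. I expect this to be the main obstacle, and the natural route is a compactness argument: suppose for contradiction that there is a sequence $\delta_k \downarrow 0$ with $\cN(\epsilon + \delta_k) \leq N-1$, so for each $k$ there exist vectors $\psi_1^{(k)}, \ldots, \psi_{N-1}^{(k)} \in Y$ with
\begin{equation*}
\sup_{x \in \bar B_{1,X}(0)} \inf_{a_1,\ldots,a_{N-1}} \Bigl\| Tx - \sum_{i=1}^{N-1} a_i \psi_i^{(k)} \Bigr\|_Y \leq \epsilon + \delta_k.
\end{equation*}
One then argues, using that the relevant quantity depends only on the $(N-1)$-dimensional subspace $\mathrm{span}\{\psi_1^{(k)},\ldots,\psi_{N-1}^{(k)}\}$ and that — because $T$ is compact — it is enough to consider subspaces meeting a bounded region of $Y$ (the closure of $T\bar B_{1,X}(0)$ is compact), that one may pass to a convergent subsequence of subspaces; the limiting subspace then witnesses $\cN(\epsilon) \leq N-1$, contradicting $\cN(\epsilon) = N$. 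The care needed is in setting up the right notion of convergence of finite-dimensional subspaces and in checking that the defining supremum--infimum passes to the limit; here compactness of $\overline{T\bar B_{1,X}(0)}$ is essential and is exactly where the hypothesis that $T$ is compact (rather than merely bounded) enters.

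For part~\ref{dofProp4}, combine the previous parts. On $(\epsilon_1, \epsilon_2)$ the function $\cN$ is non-increasing and non-negative-integer-valued, bounded above by $\cN(\epsilon_1)$ (finite by part~\ref{dofProp2}, since $\cN(\epsilon_1) = 0$ once $\epsilon_1 \geq \|T\|$, and in general finite because $T$ is compact — one should note here that $\cN(\epsilon) < \infty$ for every $\epsilon > 0$, which itself follows from compactness of $\overline{T\bar B_{1,X}(0)}$ via a finite $\epsilon$-net). A non-increasing integer-valued function bounded between $0$ and $\cN(\epsilon_1)$ can change value at most $\cN(\epsilon_1)$ times, hence has only finitely many points of discontinuity in $(\epsilon_1,\epsilon_2)$, and takes only finitely many values there. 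This completes the proof.
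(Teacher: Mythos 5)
First, a point of reference: the paper does not prove this theorem at all --- it is imported verbatim from \cite[Th.~3.2]{Somaraju2009b} --- so there is no in-paper proof to compare yours against. Judged on its own merits, your arguments for parts~\ref{dofProp2}, \ref{dofProp5} and \ref{dofProp4} are correct, including the necessary observation that $\cN(\epsilon)<\infty$ for every $\epsilon>0$, which indeed follows from a finite $\epsilon$-net for the compact set $\overline{T\bar B_{1,X}(0)}$.

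The genuine gap is in part~\ref{dofProp3}, in two respects. (i) The reduction ``$\cN$ is integer-valued and non-increasing, so upper semicontinuity amounts to right-continuity'' is backwards. For a non-increasing $f$ one has $\lim_{\delta\downarrow 0}f(\epsilon-\delta)\geq f(\epsilon)\geq\lim_{\delta\downarrow 0}f(\epsilon+\delta)$, so $\limsup_{\epsilon'\to\epsilon}f(\epsilon')$ is the limit from the \emph{left}; upper semicontinuity in the standard sense is therefore equivalent to left-continuity, while right-continuity is equivalent to \emph{lower} semicontinuity. Your compactness argument, if completed, shows that the sets $\{\epsilon:\cN(\epsilon)\leq n\}$ are closed, i.e.\ lower semicontinuity. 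Note that for any nonzero compact $T$, part~\ref{dofProp2} gives $\cN(\|T\|)=0$ while $\cN(\epsilon)\geq 1$ for every $\epsilon<\|T\|$, so $\cN$ is \emph{not} upper semicontinuous at $\epsilon=\|T\|$ under the standard convention; the cited statement must be meant as ``continuous from the right at each jump,'' and you should either prove that or flag the terminological mismatch rather than silently identifying the two notions. (ii) Even granting that right-continuity is the intended content, your proof of it is only a plan: the essential step --- replacing the $\psi_i^{(k)}$ by vectors of controlled norm near $\overline{T\bar B_{1,X}(0)}$ without degrading the span's approximation power, extracting a convergent subsequence of subspaces, handling possible collapse of dimension in the limit, and verifying that the $\sup$--$\inf$ passes to the limit --- is exactly where all the difficulty lies, and none of it is carried out. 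As written, part~\ref{dofProp3} is not established.
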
 
\subsection{$s$-numbers and Kolmogorov numbers}
The axiomatic characterisation of $s$-numbers for Banach space valued
operators is due to Pietsch, cf.~\cite{Pietsch1987}. In the remainder
of this section, we assume that $X,X',Y$ and $Y'$ are Banach spaces,
$\cB(X,Y)$ denotes the set of bounded linear operators $T:X\to Y$ and
$\|\cdot\|$ is the standard operator norm on $\cB(X,Y)$. Also, we denote by $I$ the
identity operator.  
\begin{definition}
Let $s:T\mapsto (s_n(T))_{n=1}^\infty$ be a rule that assigns to every bounded operator $T$ on some Banach space a scalar sequence such that the following conditions are satisfied:
\begin{description}
\item[SN1] $\|T\| = s_1(T)\geq s_2(T)\geq \ldots\geq 0$.
\item[SN2] $s_n(S+T) \leq s_n(S) +\|T\|$, for $S,T\in \cB(X,Y)$ and $n=1,2,\ldots$.
\item[SN3] $s_n(BTA) \leq \|B\|s_n(T)\|A\|$, for $A\in \cB(X',X),T\in \cB(X,Y), B\in \cB(Y,Y')$ and $n=1,2,\ldots$. 
\item[SN4] $s_n(I) = 1$, for $n=1,2,\ldots$.
\item[SN5] $s_n(T) = 0$ if $\mathrm{rank}(T) < n$. 
\end{description}
Then $(s_n(T))_{n=1}^\infty$ is called an \emph{s-number sequence}.
\end{definition}
It turns out that in the compact, Hilbert space case, i.e. if we restrict the
above definition to the class of compact operators mapping between Hilbert
spaces, the above properties SN1--SN5 uniquely characterise the singular values of the operator (ordered in descending order). 

For any linear subspace $S$ of $Y$ let $Q_S^Y$ denote the natural surjection from $Y$ onto the quotient space $Y/S$. The numbers 
\eqs{d_n(T) \triangleq \inf\{\|Q_S^YT\|: \mathrm{dim}(S) < n\}}
for $T\in \cB(X,Y)$ define an $s$-number sequence~\cite{Pietsch1987},
and are called \emph{Kolmogorov numbers}.  

\section{Main results}\label{sec:maiRes}
Let $X$ and $Y$ be Banach spaces with norms $\|\cdot\|_X$ and
$\|\cdot\|_Y$, respectively, let $T:X\to Y$ be compact\footnote{We
  only consider compact $T$, because the number of degrees of freedom
  is only defined for compact operators. One can conclude from the
  arguments in~\cite{Somaraju2009b} that for physically realistic
  models of communication channels, the channel operator must be
  compact. However, it may be possible to generalise the theory
  presented in the current paper to bounded operators that are not
  necessarily compact.} and suppose $B(X)$ is the unit ball in
$X$. Let $\cN(\epsilon)$ denote the number of degrees of freedom of
$T$ at level $\epsilon$ and let $d_n = d_n(T)$ denote the $n^{th}$
Kolmogorov number. 

For ease of notaion we set $\cN(\epsilon) = \infty$ for $\epsilon <0$ for the remainder of this document. 
Now, consider the sequence  $\{\sigma_n =
\sigma_n(T)\},n = 1,2,\ldots$ implicitely defined as follows
\begin{eqnarray*}
  &\sup_{\epsilon > \sigma_n} \cN(\epsilon) = n-1& \textrm{and}\\
  &\inf_{\epsilon < \sigma_n} \cN(\epsilon) = N \geq n.&
\end{eqnarray*}
Further, if $n<N$ then for all $m$ such that $n < m \leq N$, $\sigma_m \triangleq
\sigma_n$. The sequence $\{\sigma_n\}$ was defined
in~\cite{Somaraju2009b} and $\sigma_n$ was called the $n^{th}$ DOF
singular value of $T$ in~\cite{Somaraju2009b}. It should be noted that
in the Hilbert space situation, $\sigma_n$ is simply the $n^{th}$
singular value of $T$. Also, $\cN(\epsilon)$ is
discontinuous exactly at $\epsilon =\sigma_n$ for $n=1,2,\ldots$. We now show that $\sigma_n$ is in fact equal to $d_n$ for which we will need the following elementary lemma.
\begin{lemma}
\eqs{d_{n+1} = \inf_{\psi_1,\ldots,\psi_n\in Y} \sup_{x\in B(X)} \inf_{a_1,\ldots,a_n\in \bbR} \left\|Tx - \sum_{i=1}^n a_i\psi_i\right\|}
\end{lemma}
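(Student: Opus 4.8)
The plan is to unwind the definition of $d_{n+1}(T)$ and match the quotient-space quantity with the iterated infimum–supremum–infimum on the right. Recall that $d_{n+1}(T) = \inf\{\|Q_S^Y T\| : \dim(S) < n+1\}$, and note that the condition $\dim(S) < n+1$ is the same as $\dim(S) \le n$, so the infimum is effectively taken over all subspaces $S \subseteq Y$ with $\dim(S) \le n$.

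The first step is the standard identity for the quotient norm: for any $y \in Y$ and any finite-dimensional (hence automatically closed) subspace $S$, one has $\|Q_S^Y y\|_{Y/S} = \inf_{s \in S}\|y - s\|_Y$. Applying this pointwise and taking the supremum over the unit ball gives, for each fixed $S$,
\[ \|Q_S^Y T\| = \sup_{x \in B(X)} \|Q_S^Y T x\|_{Y/S} = \sup_{x \in B(X)} \inf_{s \in S}\|Tx - s\|_Y. \]

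The second step is to set up the correspondence between subspaces of dimension at most $n$ and $n$-tuples $(\psi_1,\ldots,\psi_n) \in Y^n$. Given such a tuple, $S := \mathrm{span}\{\psi_1,\ldots,\psi_n\}$ satisfies $\dim(S) \le n$ and $\inf_{s\in S}\|Tx-s\|_Y = \inf_{a_1,\ldots,a_n\in\mathbb{R}}\|Tx - \sum_{i=1}^n a_i\psi_i\|$. Conversely, any $S$ with $\dim(S) = k \le n$ is the span of some basis $\psi_1,\ldots,\psi_k$, which we pad to an $n$-tuple by setting $\psi_{k+1} = \ldots = \psi_n = 0$. Hence the infimum of $\|Q_S^Y T\|$ over admissible $S$ equals $\inf_{\psi_1,\ldots,\psi_n\in Y} \sup_{x\in B(X)} \inf_{a_1,\ldots,a_n\in\mathbb{R}} \|Tx - \sum_{i=1}^n a_i\psi_i\|$, which is the asserted formula.

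The whole argument is essentially bookkeeping, so there is no genuine obstacle; the one place that deserves a moment's care is the interplay between the strict inequality $\dim(S) < n+1$ appearing in the definition of the Kolmogorov number and the weak inequality $\dim(S) \le n$ used in the correspondence, together with the remark that permitting the $\psi_i$ to be linearly dependent or zero causes no loss, since this only yields spans of dimension strictly less than $n$, which are still admissible competitors. (If the Banach spaces are taken over $\mathbb{C}$ one replaces $\mathbb{R}$ by $\mathbb{C}$ throughout; this changes nothing in the spans and hence in the statement.)
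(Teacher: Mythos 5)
Your proposal is correct and follows essentially the same route as the paper: both rest on the quotient-norm identity $\|Q_S^YT\| = \sup_{x\in B(X)}\inf_{s\in S}\|Tx-s\|_Y$ followed by the bijective-up-to-padding correspondence between subspaces of dimension at most $n$ and $n$-tuples $(\psi_1,\ldots,\psi_n)\in Y^n$. Your version merely spells out the padding-by-zero step and the $\dim(S)<n+1$ versus $\dim(S)\le n$ bookkeeping that the paper leaves implicit.
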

\begin{proof}
Let $S$ be any subspace of $Y$ such that $\mathrm{dim}(S)\leq n$. By the definition of $Q_S^Y$, 
\eqs{\|Q_S^YT\| = \sup_{x\in B(X)}\inf_{y\in S} \|Tx - y\|_Y.}
The lemma now follows from two simple facts: for $\mathrm{dim}(S)\leq n$
there exists a set of vectors $\{\psi_1,\ldots,\psi_n\}$ such that
they span $S$, and the dimension of the span of $n$ vectors is not greater than $n$.
\end{proof}
\begin{theorem}\label{th:mainRes}
$\sigma_n = d_n.$
\end{theorem}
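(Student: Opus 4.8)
The plan is to show $\sigma_n = d_n$ by establishing the two inequalities $\sigma_n \leq d_n$ and $\sigma_n \geq d_n$, using the elementary lemma to rewrite $d_{n+1}$ as a min–max–min expression and comparing it directly with the Definition~\ref{Ndef:banDof} of $\cN(\epsilon)$. The key observation is that the quantity
\eqs{\rho_n \triangleq \inf_{\psi_1,\ldots,\psi_{n-1}\in Y}\ \sup_{x\in B(X)}\ \inf_{a_1,\ldots,a_{n-1}\in\bbR}\ \left\|Tx-\sum_{i=1}^{n-1}a_i\psi_i\right\|}
equals $d_n$ by the Lemma, and that $\cN(\epsilon)\leq n-1$ holds precisely when there exist $n-1$ vectors $\psi_i$ witnessing the defining inequality with bound $\epsilon$; that is, $\cN(\epsilon)\leq n-1$ iff $\rho_n\leq\epsilon$ (with the only subtlety being the passage from the infimum over the $\psi_i$ to an actual choice of $\psi_i$, and from $\sup_{x}$ to ``for all $x$'', handled by the $\inf$/$\sup$ being attained in the limiting sense).

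First I would show $d_n \leq \sigma_n$ is compatible with the definition of $\sigma_n$: by the first defining relation $\sup_{\epsilon>\sigma_n}\cN(\epsilon)=n-1$, so for every $\epsilon>\sigma_n$ we have $\cN(\epsilon)\leq n-1$, hence there exist $\psi_1,\ldots,\psi_{n-1}$ with $\sup_{x\in B(X)}\inf_a\|Tx-\sum a_i\psi_i\|\leq\epsilon$, which gives $d_n=\rho_n\leq\epsilon$; letting $\epsilon\downarrow\sigma_n$ yields $d_n\leq\sigma_n$. Next I would show $\sigma_n\leq d_n$: for any $\epsilon > d_n = \rho_n$, the definition of the infimum $\rho_n$ gives vectors $\psi_1,\ldots,\psi_{n-1}$ with $\sup_{x\in B(X)}\inf_a\|Tx-\sum a_i\psi_i\|\leq\epsilon$, so $\cN(\epsilon)\leq n-1$; combined with part~\ref{dofProp3} of Theorem~\ref{th:dofProp} (monotonicity), this forces $\sup_{\epsilon>d_n}\cN(\epsilon)\leq n-1$. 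One then has to check this is consistent with the first relation defining $\sigma_n$ and conclude $\sigma_n\leq d_n$, using that $\sigma_n$ is the exact threshold where $\cN$ drops below $n$.

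To make the argument airtight I would package it as: \emph{(i)} prove the equivalence ``$\cN(\epsilon)\leq n-1 \iff d_n\leq\epsilon$'' for all $\epsilon>0$, directly from Definition~\ref{Ndef:banDof} and the Lemma; \emph{(ii)} observe that by Theorem~\ref{th:dofProp}\ref{dofProp3} the function $\cN$ is non-increasing and upper semicontinuous, so the set $\{\epsilon>0:\cN(\epsilon)\leq n-1\}$ is exactly $[d_n,\infty)$ if $d_n>0$ (or $(0,\infty)$ issues handled separately when $d_n=0$); \emph{(iii)} unwind the implicit definition of $\sigma_n$: the relations $\sup_{\epsilon>\sigma_n}\cN(\epsilon)=n-1$ and $\inf_{\epsilon<\sigma_n}\cN(\epsilon)\geq n$ say precisely that $\sigma_n$ is the infimum of $\{\epsilon:\cN(\epsilon)\leq n-1\}$, i.e. $\sigma_n=d_n$ by \emph{(ii)}. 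The tie-breaking convention ($\sigma_m=\sigma_n$ for $n<m\leq N$ when $\cN$ jumps by more than one) must be matched against the corresponding behaviour of the Kolmogorov numbers: when $\cN$ jumps from $n-1$ to $N>n$ at the value $\sigma_n$, one checks $d_n=d_{n+1}=\cdots=d_N$, which follows again from the equivalence in \emph{(i)} since no $\epsilon$ satisfies $\cN(\epsilon)=n-1,\ldots,N-1$ strictly.

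The main obstacle I anticipate is the careful handling of the infimum in $d_n$ and in $\cN(\epsilon)$ not being attained: the equivalence in step \emph{(i)} requires that $d_n\leq\epsilon$ actually yields witnessing vectors $\psi_i$ achieving bound $\epsilon$ (rather than only $\epsilon'$ for every $\epsilon'>\epsilon$), and symmetrically that $\cN(\epsilon)\leq n-1$ is characterised by the non-strict inequality $\rho_n\leq\epsilon$ rather than the strict one. Resolving this needs the upper semicontinuity from Theorem~\ref{th:dofProp}\ref{dofProp3} — which guarantees $\cN$ is right-continuous in the appropriate sense so that $\cN(\epsilon)\leq n-1$ on a closed ray — together with the observation that, for fixed $\epsilon$, if $\sup_x\inf_a\|Tx-\sum a_i\psi_i^{(k)}\|\to d_n\leq\epsilon$ along some sequence of choices, the limiting inequality ``$\leq\epsilon$'' survives. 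Everything else is bookkeeping with the definitions and the monotonicity of $\cN$.
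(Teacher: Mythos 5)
Your proposal is correct and follows essentially the same route as the paper: both inequalities are obtained by playing the definition of $\cN(\epsilon)$ off against the min--max--min expression for $d_{n}$ supplied by the Lemma, using the threshold structure of $\sigma_n$ and treating the plateau/tie-breaking case separately, exactly as the paper does via its relabeling $N=n+1$. The boundary subtlety you flag in step \emph{(i)} never actually arises, because the relations defining $\sigma_n$ involve only the open rays $\epsilon>\sigma_n$ and $\epsilon<\sigma_n$, so the two strict-inequality arguments in your second paragraph already close the proof (and note that for a non-increasing $\cN$, upper semicontinuity would give left-, not right-continuity, so that appeal should be dropped).
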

\begin{proof}
Because $\sigma_1 = d_1 = \|T\|$, we only need to prove that $\sigma_{n+1} = d_{n+1}$ for $n = 1,2,\ldots$. 

Now suppose $\sigma_N = \sigma_{N-1} = \ldots = \sigma_{n+1} < \sigma_n$. We first prove that $\sigma_{n+1} \geq d_{n+1}$. Let $\delta > 0$ be some arbitrary number satisfying
$\delta < \sigma_n - \sigma_{n+1}$. Then for $\epsilon = \sigma_{n+1} + \delta$, 
\begin{equation*}
\cN(\epsilon) = n.
\end{equation*} 
Therefore, there exists a set $\{\psi_1,\ldots,\psi_n\}\in Y$ such that 
\begin{equation*}
   \sup_{x\in B(X)} \inf_{a_1,\ldots,a_n} \left\|Tx-\sum_{i=1}^n a_i\psi_i\right\|_{ Y}
    \leq \epsilon = \sigma_{n+1} + \delta.
\end{equation*}
Because, the constant $\delta> 0$ can be made arbitrarily small,
\begin{eqnarray}
d_{n+1} &=& \inf_{\psi_1,\ldots,\psi_n\in Y} \sup_{x\in B(X)} \inf_{a_1,\ldots,a_n\in \bbR} \left\|Tx - \sum_{i=1}^n a_i\psi_i\right\|\nonumber\\ &\leq& \sigma_{n+1}\label{e1}.
\end{eqnarray}
Relabeling $N = n+1$, we get 
\begin{equation*}
d_N \leq \sigma_N = \sigma_{n+1}.
\end{equation*}
To prove the converse inequality, let $\epsilon < \sigma_{N}$. Then, $\cN(\epsilon) > N$. Therefore, for all sets $\{\psi_1,\ldots,\psi_N\}\subset Y$, there exists an $x\in B(X)$ such that,
\begin{equation*}
    \inf_{a_1,\ldots,a_N} \left\|Tx-\sum_{i=1}^{N} a_i\psi_i\right\|_{ Y}
    > \epsilon > \sigma_{N} = \sigma_{n+1}.
\end{equation*}
Therefore,
\begin{eqnarray}
d_{n+1} &\geq& d_{N}\nonumber\\ &=& \inf_{\psi_1,\ldots,\psi_N\in Y} \sup_{x\in B(X)} \inf_{a_1,\ldots,a_N\in \bbR} \left\|Tx - \sum_{i=1}^{N} a_i\psi_i\right\|\nonumber\\ &\geq& \sigma_{N}\nonumber\\ &=& \sigma_{n+1}\label{e2}.
\end{eqnarray}
Here, we used the $s$-number property SN1 that $d_n$ is non-increasing in $n$. The result now follows from~\eqref{e1} and~\eqref{e2}.
\end{proof}
\begin{remark} The number of degrees of freedom of a communication
  channel is defined for compact operators mapping between arbitrary normed
  spaces, while the concept of Kolmogorov numbers is defined only
  for operators mapping between Banach spaces. 
  However, if $X$ and $Y$ are normed spaces and $T:X\to Y$ is compact,
  then consider the completions $\tilde{X}$ and $\tilde{Y}$ of $X$ and
  $Y$, respectively, so that $\tilde{X}$ and $\tilde{Y}$ are Banach
  spaces.
  Let $\tilde{T}:\tilde{X}\to\tilde{Y}$ be the standard extension of
  $T$ from $X$ to $\tilde{X}$. 
  Then $\sigma_n(T) = \sigma_n(\tilde{T})$. This fact follows directly
  from the compactness of $T$ and the definition of $\cN(\cdot)$. 
\end{remark}

We also restate below a useful result in~\cite{Somaraju2009b} that aids in the numerical computation of Kolgomorov numbers for compact operators. 
\begin{theorem}~\cite[Th. 3.8]{Somaraju2009b} Suppose $X$ and $ Y$ are Banach spaces
and $T: X\to Y$ is a compact operator. Also suppose that $X$ has a
complete Schauder basis $\{\phi_1,\phi_2,\ldots\}$ and let
$S_m=\mathrm{span}\{\phi_1,\ldots,\phi_m\}$. Let $T_m = T|_{S_m}:S_m\to Y$,
$m \in \bbZ^+$. If $\sigma_n$, the $n^{th}$ DOF singular value of $T$, exists then for
$m$ large enough $\sigma_{n,m}$, the $n^{th}$ DOF singular value of
$T_m$, will exist and 
\begin{equation*}
    \lim_{m\to\infty}\sigma_{n,m} = \sigma_n.
\end{equation*}
If $\sigma_{n,m}$ exists then it is a lower bound for
$\sigma_n$.
\end{theorem}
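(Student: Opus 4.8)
The plan is to translate everything into Kolmogorov numbers via Theorem~\ref{th:mainRes} and then exploit the compactness of $T$ at the level of the image sets $T(B(S_m))$. First I would observe that, being finite-dimensional, $S_m$ is a Banach space and $T_m:S_m\to Y$ is of finite rank, hence compact, so Theorem~\ref{th:mainRes} applies both to $T$ and to each $T_m$: whenever the relevant DOF singular value exists we have $\sigma_n=d_n(T)$ and $\sigma_{n,m}=d_n(T_m)$. It then suffices to prove: (a) $m\mapsto d_n(T_m)$ is non-decreasing with $d_n(T_m)\le d_n(T)$ for every $m$; (b) $d_n(T_m)\to d_n(T)$; and (c) if $\sigma_n$ exists then $\sigma_{n,m}$ exists for all large $m$. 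For (a), I would apply the elementary lemma (to $T_m$, using $T_m x=Tx$ for $x\in S_m$) to get
\eqs{d_n(T_m)=\inf_{\psi_1,\ldots,\psi_{n-1}\in Y}\ \sup_{x\in B(S_m)}\ \inf_{a_1,\ldots,a_{n-1}\in\bbR}\left\|Tx-\sum_{i=1}^{n-1}a_i\psi_i\right\|,}
where $B(S_m):=B(X)\cap S_m$; since $S_m\subseteq S_{m+1}$ gives $B(S_m)\subseteq B(S_{m+1})\subseteq B(X)$ and enlarging the set over which a supremum is taken only increases it, (a) follows immediately.

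Part (b) is the heart of the matter. The key geometric fact I would establish first is that $\bigcup_m B(S_m)$ is norm-dense in $B(X)$: for $x\in B(X)$, $x\neq 0$, the partial-sum projections $P_m$ of the Schauder basis satisfy $P_m x\to x$, so $\|P_m x\|\to\|x\|\neq 0$ and $x_m:=(\|x\|/\|P_m x\|)P_m x$ lies in $B(S_m)$ for $m$ large with $x_m\to x$ (the case $x=0$ is trivial). Setting $A_m:=\overline{T(B(S_m))}$ and $\cK:=\overline{T(B(X))}$, compactness of $T$ makes $\cK$ compact, the $A_m$ form an increasing sequence of compact subsets of $\cK$, and the density just shown yields $\overline{\bigcup_m A_m}=\cK$; moreover $d_n(T_m)=\inf_{\dim S<n}\sup_{z\in A_m}\mathrm{dist}(z,S)$ and $d_n(T)=\inf_{\dim S<n}\sup_{z\in\cK}\mathrm{dist}(z,S)$ (using the identity $\|Q_S^YT\|=\sup_{x\in B(X)}\mathrm{dist}(Tx,S)$ from the lemma's proof and continuity of $\mathrm{dist}(\cdot,S)$). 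Then, given $\epsilon>0$, I would fix a finite $\tfrac{\epsilon}{3}$-net of $\cK$; since the $A_m$ increase and their closure is $\cK$, there is an $M$ such that for $m\ge M$ every net point, hence every point of $\cK$, lies within $\tfrac{2\epsilon}{3}$ of $A_m$. Choosing $m\ge M$ with $d_n(T_m)$ within $\tfrac{\epsilon}{3}$ of $\lim_k d_n(T_k)$ (which exists by (a)) and a subspace $S$ with $\dim S<n$ and $\sup_{z\in A_m}\mathrm{dist}(z,S)\le d_n(T_m)+\tfrac{\epsilon}{3}$, a single triangle inequality shows every point of $\cK$ lies within $d_n(T_m)+\epsilon$ of $S$, hence $d_n(T)\le d_n(T_m)+\epsilon\le\lim_k d_n(T_k)+2\epsilon$; letting $\epsilon\downarrow 0$ and combining with (a) gives $\lim_m d_n(T_m)=d_n(T)$.

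For (c), if $\sigma_n$ exists then $d_n(T)>0$ (otherwise, for every $\epsilon>0$ there is a subspace of dimension $<n$ realising $\|Q_S^Y T\|\le\epsilon$, so $\cN(\epsilon)<n$ for all $\epsilon>0$ and $\cN$ never reaches level $n$); by (a)--(b) the non-decreasing sequence $d_n(T_m)$ then increases to $d_n(T)>0$, so $d_n(T_m)>0$ for $m$ large, and for such $m$ the degrees-of-freedom function of $T_m$ attains a value $\ge n$ for small $\epsilon$ while vanishing for $\epsilon\ge\|T_m\|$ by Theorem~\ref{th:dofProp}, so $\sigma_{n,m}$ exists and equals $d_n(T_m)$. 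The assertions $\lim_m\sigma_{n,m}=\sigma_n$ and $\sigma_{n,m}\le\sigma_n$ are then just restatements of (b) and (a). I expect (b) to be the real obstacle: the natural temptation is to deduce it from $\|T-TP_m\|\to 0$, but that norm need not converge even for compact $T$ (for instance $Te_k\equiv z\neq 0$ on $X=\ell^1$), so the argument must stay at the level of the image sets $T(B(S_m))$, which is exactly what the finite-net/compactness step above is designed to allow.
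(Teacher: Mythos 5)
This note does not actually prove the statement --- it restates \cite[Th.~3.8]{Somaraju2009b} and defers the proof to that reference --- so there is no in-paper argument to compare yours against; I can only assess your proof on its own terms, and on those terms it is essentially correct. The reduction to Kolmogorov numbers via Theorem~\ref{th:mainRes}, the monotonicity $d_n(T_m)\le d_n(T_{m+1})\le d_n(T)$ obtained from $B(S_m)\subseteq B(S_{m+1})\subseteq B(X)$ (which alone yields the final sentence of the theorem; note it also follows in one line from SN3 applied to the isometric inclusion $S_m\hookrightarrow X$), and above all the convergence step are sound. You correctly identify the trap: the shortcut $d_n(T)\le d_n(TP_m)+\|T-TP_m\|$ via SN2 is unavailable because $\|T-TP_m\|$ need not vanish for compact $T$ on a general Banach space (your rank-one example on $\ell^1$ is valid), and a mere pointwise monotone limit of the functions $S\mapsto\sup_{z\in A_m}\mathrm{dist}(z,S)$ would not let you interchange the limit with the outer infimum. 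Your finite-net argument fixes this in the right way, by producing the bound $\sup_{z\in\cK}\mathrm{dist}(z,S)\le\sup_{z\in A_m}\mathrm{dist}(z,S)+\epsilon$ \emph{uniformly over all} subspaces $S$ with $\dim S<n$ once $m$ is large, using only total boundedness of $\cK=\overline{T(B(X))}$ and the density of $\bigcup_m T(B(S_m))$ in $\cK$, which in turn rests on the Schauder-basis property $P_mx\to x$.

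The one place that needs tightening is step (c). Your inference ``$\sigma_n$ exists $\Rightarrow d_n(T)>0$'' is not forced by the paper's implicit definition of $\sigma_n$: with the convention $\cN(\epsilon)=\infty$ for $\epsilon<0$, an operator of rank exactly $n-1$ has $\sup_{\epsilon>0}\cN(\epsilon)=n-1$ and $\inf_{\epsilon<0}\cN(\epsilon)=\infty\ge n$, so $\sigma_n=0$ ``exists'' while $d_n(T)=0$. The theorem still holds in that degenerate case (for $m$ large, $T(S_m)=T(X)$ has dimension $n-1$ and $\sigma_{n,m}=0$), but you should either dispose of it separately or state that you read existence of $\sigma_n$ as $\sigma_n>0$. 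Likewise, the passage from ``$d_n(T_m)>0$'' to ``$\sigma_{n,m}$ exists'' deserves a sentence: you need that $\cN_m$ is non-increasing, integer-valued, equal to $0$ for $\epsilon\ge\|T_m\|$ and at least $n$ for $\epsilon<d_n(T_m)$, so that the level-$n$ crossing point is well defined even when $\cN_m$ jumps past the value $n$ (this is where the supplementary clause $\sigma_m\triangleq\sigma_n$ for $n<m\le N$ in the definition, together with Theorem~\ref{th:dofProp}, is used). These are presentational rather than substantive gaps; the core of the argument stands.
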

We can therefore use finite dimensional approximations of $T$ to
numerically compute the Kolgomorov numbers.
 
\section{Conclusion}\label{sec:conclusion}
In this note we establish the connection between Kolgomorov numbers
and degrees of freedom of a communication channel. Specifically, we
show that the jump points (discontinuities) of the function relating
the number of degrees of freedom to the noise level $\epsilon$ are
equal to the Kolgomorov numbers. 
This connection invites the question as to whether other $s$-number
sequences such as Gelfand numbers or approximation numbers play an
equally important role in communication systems.


\begin{thebibliography}{10}
\bibitem{Somaraju2009b}
R. Somaraju and J. Trumpf.
\newblock Degrees of freedom of a communication channel: Using DOF
  singular values.
\newblock IEEE Transactions on Information Theory, to appear. Preprint:
  arXiv:0901.1694v1 [cs.IT], 2009.

\bibitem{Shannon1948}
C.~E. Shannon.
\newblock A mathematical theory of communication.
\newblock {\em Bell Syst. Tech. J.}, 27:379--423, 1948.

\bibitem{Biglieri1998}
E. Biglieri, J. Proakis, and S. Shamai.
\newblock Fading channels: Information-theoretic and communications aspects.
\newblock {\em IEEE Transactions on Information Theory}, 44:2619--2692, 1998.

\bibitem{Gallager1968}
R.~Gallager.
\newblock {\em Information Theory and Reliable Communication}.
\newblock John Wiley \& Sons, New York, USA, 1968.

\bibitem{Migliore2006}
M.~D. Migliore.
\newblock On the role of the number of degrees of freedom of the field in
  {MIMO} channels.
\newblock {\em IEEE Transactions on Antennas Propagation}, 54(2):620--628,
  2006.

\bibitem{Hanlen2006}
L. Hanlen and M. Fu.
\newblock Wireless communication systems with spatial diversity: A volumetric
  model.
\newblock {\em IEEE Transactions on Wireless Communications}, 5(1):133--142,
  2006.

\bibitem{Xu2006}
J. Xu and R. Janaswamy.
\newblock Electromagnetic degrees of freedom in 2-d scattering environments.
\newblock {\em IEEE Transactions on Antennas and Propagation},
  54(12):3882--3894, 2006.

\bibitem{Miller2000}
D.~A.~B. Miller.
\newblock Communicating with waves between volumes: evaluating orthogonal
  spatial channels and limits on coupling strengths.
\newblock {\em Applied Optics}, 39(11):1681--1699, 2000.

\bibitem{Piestun2000}
R.~Piestun and D.~A.~B. Miller.
\newblock Electromagnetic degrees of freedom of an optical system.
\newblock {\em Journal of the Optical Society of America A}, 17(5):892--902,
  2000.

\bibitem{Bucci1989}
O.~M. Bucci and G. Franceschetti.
\newblock On the degrees of freedom of scattered fields.
\newblock {\em IEEE Transactions on Antennas Propagation}, 37(7):318--326, 1989.

\bibitem{Pietsch1980}
A.~Pietsch.
\newblock {\em Operator Ideals}.
\newblock North-Holland, Amsterdam, New York, Oxford, 1980.

\bibitem{Pietsch1987}
A.~Pietsch.
\newblock {\em Eigenvalues and s-Numbers}, volume~13 of {\em Cambridge Studies
  in Advanced Mathematics}.
\newblock Cambridge University Press, 1987.

\end{thebibliography}
\end{document}